\theoremstyle{plain}%
 \newtheorem{theorem}{Theorem}
\theoremstyle{remark}
\newtheorem{remark}{Remark}
\theoremstyle{definition}
\newtheorem{definition}{Definition}
\newtheorem{example}{Example}
\begin{document}

\begin{center}
{\Large A generalization of Deterministic Finite Automata related to discharging}

 \ 

{\textsc{John M. Campbell}} 

 \ 

\end{center}

\begin{abstract}
 Deterministic Finite Automata (DFAs) are of central importance in automata theory. In view of how state diagrams for DFAs are 
 defined using directed graphs, this leads us to introduce a generalization of DFAs related to a method widely used in graph theory 
 referred to as the \emph{discharging method}. Given a DFA $(Q, \Sigma, \delta, q_{0}, F)$, the transition function $\delta\colon Q \times 
 \Sigma \to Q$ determines a directed path in the corresponding state diagram based on an input string $a_{1} a_{2} \cdots a_{n}$ 
 consisting of characters in $\Sigma$, and our generalization can be thought of as being based on how each vertex in $D$ ``discharges'' 
 rational values to adjacent vertices (by analogy with the discharging method) depending on the string $a_{1} a_{2} \cdots a_{n}$ and 
 according to a fixed set of rules. We formalize this notion and pursue an exploration of the notion of a \emph{Discharging 
 Deterministic Finite Automaton} (DDFA) introduced in this paper. Our DDFA construction gives rise to a ring structure consisting of 
 sequences that we refer to as being \emph{quasi-$k$-regular}, and this ring generalizes the ring of $k$-regular sequences introduced 
 by Allouche and Shallit. 
\end{abstract}

\noindent {\footnotesize{\emph{MSC:} 68Q45, 05C20, 11B85}}

\vspace{0.1in}

\noindent {\footnotesize{\emph{Keywords:} Deterministic Finite Automaton, directed graph, directed path, 
 automatic sequence, $k$-regular sequence, finite state 
 machine, Thue--Morse sequence, transition function, discharging method, transition diagram}} 

\section{Introduction}\label{sectionIntro}
 Automata theory is of core importance within theoretical computer science. Among the core objects of study within automata theory are 
 finite-state machines, and Deterministic Finite Automata (DFAs) provide an important class of finite state machines. The roles played 
 by DFAs in both the number-theoretic study of automatic sequences and in combinatorics on words lead us to consider generalizing 
 DFAs in ways that are motivated by number-theoretic and combinatorial applications. This has inspired us to introduce a 
 generalization of DFAs that is related to the famous \emph{discharging method} used in graph theory, with reference to Cranston and 
 West's expository work on this method \cite{CranstonWest2017}. 

 There have been many generalizations or variants of or related to DFAs in the context of many different applications. 
 \emph{Nondeterministic Finite Automata}, which have state diagrams that allow multiple directed edges with the same label starting 
 with the same vertex, provide a notable generalization of DFAs, as is the case with \emph{Generalized Automata}, for which each 
 transition is defined on words (or strings or blocks), as opposed to individual letters (or the null string), with Generalized Automata 
 having been introduced within Eilenberg's monographs on automata, languages, and machines \cite{Eilenberg1974}. Brzozowski and 
 McCluskey, Jr., in 1963 \cite{BrzozowskiMcCluskey1963}, introduced finite state automata allowing a regular expression for each 
 transition, thus generalizing Generalized Automata. We also invite the interested reader to review further research contributions on 
 or related to generalizations or variants of DFAs that have inspired our work 
 \cite{GiammarresiMontalbano1999,HanWood2005,Hashigushi1991,Nagy2023}. 

 Since our work is based on generalizing the concept of a DFA, we proceed with the following standard definition, referring the 
 interested reader to background material in the standard monograph on automatic sequences \cite{AlloucheShallit2003}. For further 
 background material concerning DFAs, we refer to the appropriate sections of texts on the theory of computation 
 \cite[\S2.2]{Wood1987} and on formal languages \cite[\S2.21]{RozenbergSalomaa1997}. 

\begin{definition}\label{definitionDFA}
 A \emph{Deterministic Finite Automaton} is a $5$-tuple $(Q, \Sigma, \delta, q_0, F)$ consisting of a finite nonempty set $Q$ of 
 internal \emph{states}, a finite nonempty set $\Sigma$ (an \emph{input alphabet}), a \emph{transition function} $\delta\colon Q 
 \times \Sigma \to Q$, an \emph{initial state} $q_{0} \in Q$, and a set $F \subseteq Q$ of \emph{accepting 
 states} \cite[\S4.1]{AlloucheShallit2003}. 
\end{definition}

 The graph-theoretic intuition behind our generalization of Definition \ref{definitionDFA} has to do with the use of transition diagrams to 
 illustrate DFAs. For a DFA $(Q, \Sigma, \delta, q_0, F)$, and for a nonempty string $w$ consisting of characters in $\Sigma$, we adopt 
 the convention whereby the given DFA takes $w$ as input according to the sequence of characters in $w$ from left to right, again 
 referring to the standard text on automatic sequences for details \cite[\S4.1]{AlloucheShallit2003}. A \emph{transition diagram} or 
 \emph{state diagram} for the DFA $(Q, \Sigma, \delta, q_0, F)$ is a directed graph with $Q$ as its vertex set and with directed edges 
 determined in the following manner. For states or vertices $q$ and $q'$ in $Q$, and for a character $c \in \Sigma$, we have 
 that $\delta(q, c) = q'$ if and only if there is a directed edge from $q$ to $q'$ labeled with $c$ in the state diagram for the 
 given DFA (with vertices in $F$ being denoted with an extra circle). 

\begin{example}\label{exTM}
 In Section \ref{sectiongeneralizeDFA} below, we consider a variant of the state diagram 
\begin{center}
\begin{tikzpicture} [draw=black!80!black,
 node distance = 3cm, 
 on grid, 
 auto,
 every loop/.style={stealth-}]
 
\node (q0) [state, 
 initial, 
 accepting, 
 initial text = {}] {$0$};
 
\node (q1) [state,
 right = of q0] {$1$};
 
\path [-stealth, thick]
 (q0) edge[bend left] node {$1$} (q1)
 (q1) edge[bend left] node {$1$} (q0)
 (q0) edge [loop above] node {$0$}()
 (q1) edge [loop above] node {$0$}();
\end{tikzpicture}. 
\end{center}
 Setting $Q = \{ q_{0}, q_{1} \}$, with $q_{0} = 0$ and $q_{1} = 1$ and setting $\Sigma = \{ 0, 1 \}$, we define a transition function 
 $\delta\colon Q \times \Sigma \to Q$ so that $\delta(q_0, 0) = 0$, $\delta(q_0, 1) = 1$, $\delta(q_1, 0) = 1$, and $\delta(q_1, 
 1) = 0$. Setting $F = \{ q_{0} \}$, the DFA $(Q, \Sigma, \delta, q_0, F)$ is illustrated with the transition diagram shown above. 
 Inputting the binary expansion for a nonnegative integer $n$ into the specified DFA, the label of the final state associated with this 
 input is the number $\text{{\bf t}}(n)$ of $1$-digits, modulo 2, in the base-$2$ expansion of $n$, with the integer sequence 
\begin{equation}\label{TMsequence}
 (\text{{\bf t}}(n) : n \in \mathbb{N}_{0}) = (0, 1, 1, 0, 1, 0, 0, 1, 1, 0, 0, 1, 0, 1, 1, 0, 1, 0, 0, 1, 0, 1, \ldots) 
\end{equation}
 being known as the \emph{Thue--Morse sequence}. 
\end{example}

 The importance of the computational model given by DFAs in conjunction with the famous ubiquity of the integer sequence in 
 \eqref{TMsequence} \cite{AlloucheShallit1999} lead us to consider generalizations of Example \ref{exTM}. Informally, if we imagine 
 inputting a ``unit charge'' into the state diagram in Example \ref{exTM}, then our generalization can be thought of as being based on 
 how a vertex would discharge according to a fixed set of rules, by analogy with the discharging method described below. 

 The discharging method may be seen as one of the most famous methods in graph theory, in view of how it is involved in the proof of 
 the Four Color Theorem. In its most basic form, given a graph $G$, each vertex $v$ is assigned a ``charge'' equal to the degree of $v$, 
 and portions of such charges are repositioned (typically to adjacent vertices) according to discharging rules to demonstrate the 
 existence of a local structure of a graph \cite{CranstonWest2017}. More specifically, in the application of this simplest form of 
 discharging, under the assumption that $G$ does not involve a structure such as a subgraph or an edge/vertex satisfying certain 
 properties, then, ideally, discharging rules applied to reallocate portions of the initially assigned charge could be used to demonstrate 
 that the final charge for every vertex is at least some specified value in such a way so that this would result in a contradiction (given a 
 known value for the average degree, and since the total charge is constant). 

\section{A generalization of DFAs related to discharging}\label{sectiongeneralizeDFA}
 Ideally, by extending, as below, the definition of a DFA in a way that is motivated by automatic sequences and graph-theoretic 
 properties associated with state diagrams, this could help to give light to related problems in the field of combinatorics on words, or 
 active areas of research related to DFAs, such as research areas related to the conversion of DFAs into regular expressions.

\begin{definition}\label{definitionDDFA}
 A \emph{Discharging Deterministic Finite Automaton} (DDFA) is a $6$-tuple 
 $(Q, \Sigma, \delta, q_0, F, P)$ 
 such that $(Q, \Sigma, \delta, q_0, F)$ is a DFA, and where $P$ is defined as follows. 
 For $q \in Q$ and for $s \in \Sigma$, 
 we let expressions of the forms 
 $n^{\text{current}}_{q, s}$ and 
 $n_{q, s}^{\text{not current}}$ denote fixed nonnegative rational numbers satisfying 
\begin{equation}\label{totalaxiom}
 n^{\text{current}}_{q, s} + 
 \sum_{t \neq s} n_{q, t}^{\text{not current}} = 1, 
\end{equation}
 letting it be understood that the sum over $t \neq s$ in \eqref{totalaxiom}
 is over all characters $t \in \Sigma \setminus \{ s \}$. 
 We then let 
 $$ P = \{ \{ (q, s, t, 
 n^{\text{current}}_{q, s}, n^{\text{not current}}_{q, t} 
 ) : t \in \Sigma \setminus \{ s \} \} : q \in Q, s \in \Sigma \}. $$ 
\end{definition}

\begin{example}\label{exTMlike}
 Let the entries of the tuple 
 $(Q, \Sigma, \delta, q_0, F)$ 
 be as in Example \ref{exTM}, and we set 
\begin{align*}
 n_{q_{0}, 0}^{\text{current}} & = n_{q_{0}, 1}^{\text{not current}} = \frac{1}{2}, \\
 n_{q_{0}, 1}^{\text{current}} & = n_{q_{0}, 0}^{\text{not current}} = \frac{1}{2}, \\ 
 n_{q_{1}, 0}^{\text{current}} & = n_{q_{1}, 1}^{\text{not current}} = \frac{1}{2}, \\ 
 n_{q_{1}, 1}^{\text{current}} & = n_{q_{1}, 0}^{\text{not current}} = \frac{1}{2}. 
\end{align*}
 We let this be illustrated as below, where the two ``$\frac{1}{2}$'' symbols in each vertex 
 are meant to illustrate that, informally, each vertex can be thought of as discharging 
 equally based on the two directed edges starting at this vertex. 
\begin{center} 
\begin{tikzpicture} [draw=black!80!black,
 node distance = 3cm, 
 on grid, 
 auto,
 every loop/.style={stealth-}]
 
\node (q0) [initial,state with output,accepting] {
 $q_0$ \nodepart{lower} $\frac{1}{2}, \frac{1}{2}$
};
 

\node (q1) [state with output, right = of q0] {
 $q_1$ \nodepart{lower} $\frac{1}{2}, \frac{1}{2}$
};

\path [-stealth, thick]
 (q0) edge[bend left] node {$1$} (q1)
 (q1) edge[bend left] node {$1$} (q0)
 (q0) edge [loop above] node {$0$}()
 (q1) edge [loop above] node {$0$}();
\end{tikzpicture}
\end{center}
\noindent To illustrate how this relates to the discharging method, we begin by assigning a unit charge to the initial state, 
 and we consider 
 inputting the string $1010$ into the above 
 DDFA and keeping track of how the vertices discharge according to the given rules. 
 Inputting the character $1 \in \Sigma$, we find that $\delta(q_0, 1) = q_1$, and we recall that $n_{q_{0}, 1}^{\text{current}} = 
 \frac{1}{2}$. Informally, this can be thought of as signifying that half of the initial unit charge is to be reallocated to the 
 vertex $q_{1}$. Since $n_{q_{0}, 0}^{\text{not current}} = \frac{1}{2}$, this can be thought of as signifying that the remaining half of 
 the initial unit charge (apart from the half-charge sent from $q_0$ to $q_1$) is sent back to $q_{0}$, and this is illustrated below, 
 where the colored vertex indicates the current state according to the relation such that $\delta(q_0, 1) = q_1$ and that the 
 character $1 \in \Sigma$ has been inputted. 
\begin{center}
\begin{tikzpicture} [draw=black!80!black,
 node distance = 3cm, 
 on grid, 
 auto,
 every loop/.style={stealth-}]

\node (q0) [state with output,accepting] {
 $q_0$ \nodepart{lower} $\frac{1}{2}$
};

\node (q1) [state with output, right = of q0,draw = blue, fill = blue!30] {
 $q_1$ \nodepart{lower} $\frac{1}{2}$
};

\path [-stealth, thick]
 (q0) edge[bend left] node {$1$} (q1)
 (q1) edge[bend left] node {$1$} (q0)
 (q0) edge [loop above] node {$0$}()
 (q1) edge [loop above] node {$0$}();
\end{tikzpicture}
\end{center}
\noindent After inputting $1 \in \Sigma$, we then apply the transition rule such that $\delta(q_{1}, 0) = q_{1}$. Informally, given that 
 the charge associated with $q_{1}$, after having inputted the symbol $1$, is $\frac{1}{2}$, since $n_{q_{1}, 0}^{\text{current}} = 
 \frac{1}{2}$, we can think of half of the half-charge associated with $q_{1}$ as being sent from $q_{1}$ and back to itself, with the 
 remaining charge being sent to $q_0$, yielding the charge distribution indicated below, where the coloured vertex indicates the 
 current state given by the relation $\delta(q_{0}, 0) = q_{0}$. 
\begin{center}
\begin{tikzpicture} [draw=black!80!black,
 node distance = 3cm, 
 on grid, 
 auto,
 every loop/.style={stealth-}]

\node (q0) [state with output,accepting] {
 $q_0$ \nodepart{lower} $\frac{3}{4}$
};

\node (q1) [state with output, right = of q0,draw = blue, fill = blue!30] {
 $q_1$ \nodepart{lower} $\frac{1}{4}$
};

\path [-stealth, thick]
 (q0) edge[bend left] node {$1$} (q1)
 (q1) edge[bend left] node {$1$} (q0)
 (q0) edge [loop above] node {$0$}()
 (q1) edge [loop above] node {$0$}();
\end{tikzpicture}
\end{center}
\noindent Continuing in the manner suggested above, according to the input string
 $1010$, the next step in the input process may be illustrated with 
\begin{center}
\begin{tikzpicture} [draw=black!80!black,
 node distance = 3cm, 
 on grid, 
 auto,
 every loop/.style={stealth-}]
 
\node (q0) [state with output,draw = blue, fill = blue!30,accepting] {
 $q_0$ \nodepart{lower} $\frac{7}{8}$
};
 

\node (q1) [state with output, right = of q0] {
 $q_1$ \nodepart{lower} $\frac{1}{8}$
};

\path [-stealth, thick]
 (q0) edge[bend left] node {$1$} (q1)
 (q1) edge[bend left] node {$1$} (q0)
 (q0) edge [loop above] node {$0$}()
 (q1) edge [loop above] node {$0$}();
\end{tikzpicture}
\end{center}

\noindent and the remaining step is illustrated below. 

\begin{center}
\begin{tikzpicture} [draw=black!80!black,
 node distance = 3cm, 
 on grid, 
 auto,
 every loop/.style={stealth-}]
 
\node (q0) [state with output,draw = blue, fill = blue!30,accepting] {
 $q_0$ \nodepart{lower} $\frac{7}{16}$
};
 

\node (q1) [state with output, right = of q0] {
 $q_1$ \nodepart{lower} $\frac{9}{16}$
};

\path [-stealth, thick]
 (q0) edge[bend left] node {$1$} (q1)
 (q1) edge[bend left] node {$1$} (q0)
 (q0) edge [loop above] node {$0$}()
 (q1) edge [loop above] node {$0$}();
\end{tikzpicture}
\end{center}
\end{example}

 By taking the same DDFA as in Example \ref{exTMlike}, and by mimicking the steps involved in Example \ref{exTMlike} based on input 
 strings given by the base-2 expansions for consecutive nonnegative integers, and by then recording the final charge $a_{n}$ attained by 
 the final state (which may be thought of as the last colored vertex by analogy with the illustrations in Example \ref{exTMlike}), this 
 gives us the sequence 
\begin{equation}\label{eqTMlike}
 (a_{n} : n \in \mathbb{N}_{0}) = \left( \frac{1}{2},\frac{1}{2},\frac{1}{4},\frac{3}{4}, \frac{1}{8}, \frac{7}{8}, 
 \frac{3}{8},\frac{5}{8},\frac{1}{16},\frac{15}{16},\frac{7}{16}, 
 \frac{9}{16},\frac{3}{16},\frac{13}{16},\frac{5}{16}, \ldots \right). 
\end{equation}
 For example, the valuation $a_{10} = \frac{7}{16}$ shown in \eqref{eqTMlike} 
 agrees with the final charge computed in Example \ref{exTMlike}. 
 By taking the integer sequence 
\begin{equation}\label{seqnotinOEIS}
 (1, 1, 1, 3, 1, 7, 3, 5, 1, 15, 7, 9, 3, 13, 5, 11, 1, 31, 15, 17, 7, 25, 9, 23, 3, \ldots)
\end{equation}
 given by the sequence of consecutive numerators for the reduced fractions for the entries in \eqref{eqTMlike}, we find that the 
 integer sequence in \eqref{seqnotinOEIS} is not currently included in the {O}n-{L}ine {E}ncyclopedia of {I}nteger {S}equences 
 \cite{Sloane2025}, suggesting that our DDFA construction is new. We may equivalently define the rational sequence in 
 \eqref{eqTMlike} according to the recursion such that 
\begin{equation}\label{ancases}
 a_{n} = \begin{cases} 
 \frac{a\left( \frac{n}{2} \right)}{2} & \text{if $n \equiv 0 \, \text{mod} \, 2$ and $n > 1$,} \\ 
 1 - \frac{a\left( \frac{n-1}{2} \right)}{2} & \text{if $n \equiv 1 \, \text{mod} \, 2$ and $n > 1$,} 
 \end{cases} 
\end{equation}
 with the initial values $a_{0} = a_{1} = \frac{1}{2}$. If we let $b_{n}$, for a nonnegative integer $n$, 
 denote the numerator of the reduced fraction associated with $a_{n}$, then 
 we find that the integer sequence 
\begin{equation}\label{modifybn}
 \left( \frac{b(n) + 1}{2} : n \in \mathbb{N} \right) 
 = (1, 1, 2, 1, 4, 2, 3, 1, 8, 4, 5, 2, 7, 3, 6, 1, 16, 8, 9, \ldots) 
\end{equation}
 agrees (up to an index shift) with the sequence that is indexed as {\tt A131271} 
 in the OEIS and that is defined below. 

 We construct a triangular array with entries $T(n, k)$ for $n \geq 0$ and for $k \in \{ 1, 2, \ldots, 2^{n} \}$, we set $T(0, 1) = 1$ 
 and $T(n, 2k-1) = T(n-1, k)$ and $T(n, 2k) = 2^{n} + 1 - T(n - 1, k)$, and we then form the integer sequence {\tt A131271} by taking 
 consecutive entries among the consecutive rows of the triangular array we have defined. It is immediate from the recursion in 
 \eqref{ancases} that the integer sequence in \eqref{modifybn} agrees with {\tt A131271} (up to an offset). The {\tt A131271} 
 sequence has direct applications in respiratory physiology, with this OEIS sequence being referenced explicitly in the work of 
 Demongeot and Waku on the use of interval iterations in the context of an entrainment problem \cite{DemongeotWaku2009}. 
 It seems that the sequence {\tt A131271} has not previously been considered in relation 
 to DFAs or variants or generalizations of DFAs, which motivates the application of our construction 
 in relation to the work of Demongeot and Waku \cite{DemongeotWaku2009}. 
 As noted in the OEIS entry indexed as {\tt A131271}, 
 the integer sequence described therein agrees with a construction due to 
 Demongeot and Waku \cite{DemongeotWaku2009} involving limit cycles of interval iterations, 
 with applications in the fields of 
 mathematical demography and mathematical population studies \cite{DemongeotWaku2005}. 
 This further motivates the notion of a DDFA we introduce in this paper, 
 in view of the connection between our Thue--Morse-like sequence in 
 \eqref{eqTMlike} and the research due to 
 Demongeot and Waku \cite{DemongeotWaku2009,DemongeotWaku2005}. 

\subsection{Discharging Deterministic Finite Automata with Output}
 Our discharging-based generalization of DFAs 
 is inspired by past graph-theoretic approaches toward or related to the study of automata, 
 as in the work of Restivo and Vaglica \cite{RestivoVaglica2012} and 
 further references 
 \cite{BrandenburgSkodinis2005,FraigniaudIlcinkasPeerPelcPeleg2005,Gruber2012,Kelarev2004,KonitzerSimon2017}. 
 To begin with, we require Definition \ref{definitionDFAO} below for the purposes of our construction. 

\begin{definition}\label{definitionDFAO}
 A \emph{Deterministic Finite Automaton with Output} (DFAO) 
 is a $6$-tuple $(Q, \Sigma, \delta, q_0, \Delta, \tau)$, 
 for $Q$ and $\Sigma$ and $\delta$ and $q_0$ as in Definition \ref{definitionDFA}, 
 and where $\Delta$ is an \emph{output alphabet}, with $\tau\colon Q \to \Delta$ as the 
 \emph{output function} \cite[p.\ 138]{AlloucheShallit2003}. 
\end{definition}

 Let $\Sigma^{\ast}$ denote the free monoid on $\Sigma$. For a DFA $(Q, \Sigma, \delta, q_0, F)$, the domain of the function 
 $\delta$ is extended from $Q \times \Sigma$ to $Q \times \Sigma^{\ast}$ as follows \cite[p.\ 129]{AlloucheShallit2003}. For the 
 empty string $\epsilon \in \Sigma^{\ast}$, and for arbitrary $q \in Q$, we set $\delta(q, \epsilon) = q$. We then define $\delta$ 
 recursively so that $\delta(q, xa) = \delta(\delta(q, x), a)$ for $x \in \Sigma^{\ast}$ and $a \in \Sigma$. Informally, if we consider 
 inputting the consecutive characters of a word $w$ in $\Sigma^{\ast}$ into a state diagram associated with a DFA, then, by moving 
 from state to state (according to the transition function for the given DFA and according to the consecutive characters in $w$), we 
 would take the label of the final vertex and then apply the mapping $\tau$ to this label, thereby providing a generalization of DFAs in 
 the sense that a DFA may be thought of as computing a function from $\Sigma^{\ast}$ to $\{ 0, 1 \}$, i.e., depending on whether an 
 accepting state is reached \cite[p.\ 139]{AlloucheShallit2003}. The many applications of DFAs, including in areas of protocol analysis 
 and natural language processing, motivate our further generalizing DFAs. 

\begin{definition}
 A \emph{Discharging Deterministic Finite Automaton with Output} (DDFAO) is a $7$-tuple 
 $(Q, \Sigma, \delta, q_0, P, \Delta, \tau)$ 
 for $Q$ and $\Sigma$ and $\delta$ and $q_0$ and $P$ as in Definition \ref{definitionDDFA}, 
 again letting $\Delta$ denote an output alphabet, 
 and again letting $\tau\colon Q \to \Delta$ 
 be an output function. 
\end{definition}

 In view of how the transition function for a DFA 
 $(Q, \Sigma, \delta, q_0, F)$ is extended from $Q \times \Sigma$ to $Q \times \Sigma^{\ast}$ 
 (and this resultant mapping is often referred to as an \emph{extended transition function} and is often denoted with $\delta^{\ast}$), 
 we introduce a further extension, as below, that, informally, takes into account the notion of ``charge'' 
 implicit in our construction of a DDFAO. 

\begin{definition}\label{definitionchargeext}
 Letting $Q$ and $\Sigma$ and $\delta$ and $q_0$ and $P$
 be as in the definition for a DDFA, 
 we define the 
 \emph{charge-extended transition function}
 $$ \delta^{c}\colon Q \times \Sigma^{\ast} \to Q \times [0, 1] $$
 as follows. For distinguished $q \in Q$, we set $\delta^{c}(q, \epsilon) = (q, 1)$. 
 For a nonempty word $w \in \Sigma^{\ast}$ we write $w = w_{1} w_{2} \cdots w_{\ell(w)}$. 
 We also let $Q = \{ q^{(1)}$, $ q^{(2)}$, $ \ldots$, $ q^{(|Q|)} \}$, 
 and we define the \emph{contemporary charge} $c^{q, i}$ of an element 
 in $Q$ and parameterized by $q$ 
 and $i \in \{ 0, 1, 2, \ldots, \ell(w) \}$ as follows. 
 To begin with, we set $c^{q, 0}(q) = 1$ 
 and $c^{q, 0}(q^{(j)}) = 0$ for $q^{(j)} \neq q$. 
 For $i \in \{ 1, 2, 3, \ldots, \ell(w) \}$, we then let $k_i$ denote the unique 
 integer in $\{ 1, 2, \ldots, |Q| \}$ such that 
\begin{equation}\label{deltanested}
 q^{(k_i)} = \delta(\ldots \delta(\delta(\delta(q, w_1), w_2), w_3), \ldots, w_{i}), 
\end{equation}
 and we also write $q^{(k_0)} = q$. 
 Adopting the convention whereby a sum over the null set is $0$, we set 
\begin{multline*}
 c^{q, i+1}\big( q^{(k_{i})} \big) = \\
 \begin{cases} 
 c^{q, i}\big( q^{(k_{i})} \big) \left( 1 - n^{\text{current}}_{q^{(k_i)}, w_{i+1}} 
 - \sum_{ \substack{ t \neq w_{i+1} \\ \delta(q^{(k_i)}, t) \neq q^{(k_i)} } } n_{q^{(k_i)}, t}^{\text{not current}} \right) & \null \\ 
 \hspace{2.7in} \text{if $\delta\big( q^{(k_i)}, w_{i+1} \big) \neq q^{(k_i)} $}, & \null \\
 c^{q, i}\big( q^{(k_{i})} \big) \left( 1 
 - \sum_{ \substack{ t \neq w_{i+1} \\ \delta(q^{(k_i)}, t) \neq q^{(k_i)} } } n_{q^{(k_i)}, t}^{\text{not current}} \right) & \null \\ 
 \hspace{2.7in} \text{if $\delta\big( q^{(k_i)}, w_{i+1} \big) = q^{(k_i)} $}, & \null 
 \end{cases}
\end{multline*}
 with 
\begin{multline*}
 c^{q, i+1}\big( q^{(k_{i+1})} \big) = \\
 \begin{cases} 
 c^{q, i}\big( q^{(k_{i+1})} \big) + c^{q, i}\big( q^{(k_{i})} \big) \left( n^{\text{current}}_{q^{(k_i)}, w_{i+1}} 
 + \sum_{ \substack{ t \neq w_{i+1} \\ \delta(q^{(k_i)}, t) = q^{(k_{i+1})} } } n_{q^{(k_i)}, t}^{\text{not current}} \right) & \null \\ 
 \hspace{2.7in} \text{if $\delta\big( q^{(k_i)}, w_{i+1} \big) \neq q^{(k_i)} $}, & \null \\ 
 c^{q, i+1}\big( q^{(k_i)} \big) & \null \\ 
 \hspace{2.7in} \text{if $\delta\big( q^{(k_i)}, w_{i+1} \big) = q^{(k_i)} $}, & \null 
 \end{cases} 
\end{multline*}
 and, for $j \in \{ 1, 2, \ldots, |Q| \}$ not equal to $k_i$ and not equal to $k_{i+1}$, we set 
\begin{equation}\label{202506121177A77M1A}
 c^{q, i+1}\big( q^{(j)} \big) = c^{q, i}\big( q^{(j)} \big) 
 + c^{q, i}\big( q^{(k_{i})} \big) \sum_{\substack{t \\ \delta\left( q^{k_i}, t \right) 
 = q^{(j)} }} n_{q^{(k_i)}, t}^{\text{not current}}. 
\end{equation}
 We then set 
\begin{equation}\label{setdeltacqw}
 \delta^{c}( q, w ) 
 = \left( q^{( k_{\ell(w)} )}, 
 c^{q, \ell(w)}\big( q^{( k_{\ell(w)} )} \big) \right). 
\end{equation}
 We also write $ \delta^{c}( q, w, 1 ) = q^{( k_{\ell(w)} )}$ and 
 $ \delta^{c}( q, w, 2 ) = 
 c^{q, \ell(w)}\big( q^{( k_{\ell(w)} )} \big)$. 
\end{definition}

\begin{example}\label{extuplecharge}
 Let $Q = \{ q_0, q_1, q_2, q_3 \}$, let $\Sigma = \{ 0, 1 \}$, 
 and let the transition function $\delta\colon Q \times \Sigma \to Q$ be as illustrated in the diagram below. 
 As in Example \ref{exTMlike}, 
 the copies of $\frac{1}{2}$ within the vertices of the following diagram are meant to illustrate that, informally, 
 each vertex discharges evenly according to the two directed edges starting from the same vertex (regardless of the ``current''
 or ``not current'' superscripts). 
 Explicitly, we have that 
\begin{align*}
 n_{q_{i}, 0}^{\text{current}} & = n_{q_{i}, 1}^{\text{not current}} = \frac{1}{2} \ \text{and} \\
 n_{q_{i}, 1}^{\text{current}} & = n_{q_{i}, 0}^{\text{not current}} = \frac{1}{2} 
\end{align*} 
 for indices $i \in \{ 0, 1, 2, 3 \}$, 
 and this is illustrated below. 
\begin{center}
\begin{tikzpicture} [draw=black!80!black,
 node distance = 3cm, 
 on grid, 
 auto,
 every loop/.style={stealth-}]
 
\node (q0) [initial,state with output] {
 $q_0$ \nodepart{lower} $\frac{1}{2}$, $\frac{1}{2}$
};
 

\node (q1) [state with output, right = of q0] {
 $q_1$ \nodepart{lower} $\frac{1}{2}$, $\frac{1}{2}$
};

\node (q2) [state with output, below = of q0] {
 $q_2$ \nodepart{lower} $\frac{1}{2}$, $\frac{1}{2}$
};

\node (q3) [state with output, right = of q2] {
 $q_3$ \nodepart{lower} $\frac{1}{2}$, $\frac{1}{2}$
};

\path [-stealth, thick]
 (q0) edge node {$1$} (q1)
 (q0) edge[swap] node {$0$} (q2)
 (q1) edge node {$0$} (q3)
 (q1) edge node {$1$} (q2)
 (q3) edge node {$1$} (q2)
 (q2) edge [loop below] node {$0$}()
 (q2) edge [loop left] node {$1$}()
 (q3) edge [loop below] node {$0$}();
\end{tikzpicture}
\end{center}
 This diagram is related to the automatic sequence known as the Fredholm--Rueppel sequence given by the characteristic function for 
 positive powers of $2$, and this is later clarified. 
 Informally, by inputting, for example, the word $w = 1010$ into 
 the DDFA illustrated above, 
 we can mimic the steps illustrated in Example \ref{exTMlike} 
 to compute the final charge $\delta^{c}(q_0, 1010)$
 associated with the given input. 
 From \eqref{deltanested}, we have that 
\begin{equation*}
 q^{(k_1)} = \delta(q_0, w_1) = \delta(q_0, 1) = q_{1}. 
\end{equation*}
 Since $\delta\big( q^{(k_i)}, w_{i + 1} \big) \neq q^{(k_i)}$ 
 for the $i = 0$ case, 
 i.e., since 
 $\delta\big( q_{0}, 1 \big) = q_{1} \neq q_{0}$,
 we find that 
\begin{align*}
 c^{q_0, 1}\big( q_{0} \big) 
 & = c^{q_0, 0}\big( q_{0} \big) \left( 1 - n^{\text{current}}_{q_{0}, 1} 
 - n_{q_{0}, 0}^{\text{not current}} \right) = 0, 
\end{align*}
 according to the case of the recursion in Definition \ref{definitionchargeext}
 for evaluating $c^{q, i+1}\big( q^{(k_i)} \big)$
 if $\delta\big( q^{(k_i)}, w_{i+1} \big) \neq q^{(k_i)}$. 
 Similarly, we find that 
 $$ c^{q_0, 1}\big( q_1 \big) = 
 c^{q_0, 0}\big( q_1 \big) + c^{q_0, 0}\big( q_0 \big) \left( n^{\text{current}}_{q_0, 1} 
 + \sum_{ \substack{ t \neq 1 \\ \delta(q_0, t) = q_1 } } n_{{q_0}, t}^{\text{not current}} \right) = \frac{1}{2}, $$
 according to the case of the recursion in Definition \ref{definitionchargeext}
 for evaluating
 $c^{q, i+1}\big( q^{(k_{i+1})} \big)$
 if $\delta\big( q^{(k_i)}, w_{i+1} \big) \neq q^{(k_i)}$. Simliarly, from \eqref{202506121177A77M1A}, 
 we obtain that 
\begin{equation*}
 c^{q_0, 1}\big( q_2 \big) = c^{q_0, 0}\big( q_2 \big) 
 + c^{q_0, 0}\big( q_0 \big) \sum_{\substack{t \\ \delta\left( q_0, t \right) 
 = q_2 }} n_{q_0, t}^{\text{not current}} = \frac{1}{2}, 
\end{equation*}
 and another application of 
 \eqref{202506121177A77M1A} gives us that
 $c^{q_0, 1}\big( q_{3} \big) = 0$, yielding the charge distribution illustrated below, 
 where the highlighted vertex illustrates the ``current'' vertex after the application of the transition
 function $\delta$ to obtain $\delta(q_0, 1) = q_1$. 
\begin{center}
\begin{tikzpicture} [draw=black!80!black,
 node distance = 3cm, 
 on grid, 
 auto,
 every loop/.style={stealth-}]
 
\node (q0) [state with output] {
 $q_0$ \nodepart{lower} $0$
};
 

\node (q1) [state with output, right = of q0,draw = blue, fill = blue!30] {
 $q_1$ \nodepart{lower} $\frac{1}{2}$
};

\node (q2) [state with output, below = of q0] {
 $q_2$ \nodepart{lower} $\frac{1}{2}$
};

\node (q3) [state with output, right = of q2] {
 $q_3$ \nodepart{lower} $0$
};

\path [-stealth, thick]
 (q0) edge node {$1$} (q1)
 (q0) edge[swap] node {$0$} (q2)
 (q1) edge node {$0$} (q3)
 (q1) edge node {$1$} (q2)
 (q3) edge node {$1$} (q2)
 (q2) edge [loop below] node {$0$}()
 (q2) edge [loop left] node {$1$}()
 (q3) edge [loop below] node {$0$}();
\end{tikzpicture}
\end{center}
A similar approach yields the charge distribution illustrated below, 
 where the highlighted vertex illustrates the ``current'' vertex corresponding to 
 the right-hand side of $\delta(q_1, 0) = q_3$. 
\begin{center}
\begin{tikzpicture} [draw=black!80!black,
 node distance = 3cm, 
 on grid, 
 auto,
 every loop/.style={stealth-}]
 
\node (q0) [state with output] {
 $q_0$ \nodepart{lower} $0$
};
 

\node (q1) [state with output, right = of q0] {
 $q_1$ \nodepart{lower} $0$
};

\node (q2) [state with output, below = of q0] {
 $q_2$ \nodepart{lower} $\frac{3}{4}$
};

\node (q3) [state with output, right = of q2,draw = blue, fill = blue!30] {
 $q_3$ \nodepart{lower} $\frac{1}{4}$
};

\path [-stealth, thick]
 (q0) edge node {$1$} (q1)
 (q0) edge[swap] node {$0$} (q2)
 (q1) edge node {$0$} (q3)
 (q1) edge node {$1$} (q2)
 (q3) edge node {$1$} (q2)
 (q2) edge [loop below] node {$0$}()
 (q2) edge [loop left] node {$1$}()
 (q3) edge [loop below] node {$0$}();
\end{tikzpicture}
\end{center}
\noindent We then obtain the charge distribution illustrated below, 
 which may be thought of as corresponding to the third character $w_3 = 1$ in the input word $w = 1010$. 
\begin{center}
\begin{tikzpicture} [draw=black!80!black,
 node distance = 3cm, 
 on grid, 
 auto,
 every loop/.style={stealth-}]
 
\node (q0) [state with output] {
 $q_0$ \nodepart{lower} $0$
};
 

\node (q1) [state with output, right = of q0] {
 $q_1$ \nodepart{lower} $0$
};

\node (q2) [state with output, below = of q0,draw = blue, fill = blue!30] {
 $q_2$ \nodepart{lower} $\frac{7}{8}$
};

\node (q3) [state with output, right = of q2] {
 $q_3$ \nodepart{lower} $\frac{1}{8}$
};

\path [-stealth, thick]
 (q0) edge node {$1$} (q1)
 (q0) edge[swap] node {$0$} (q2)
 (q1) edge node {$0$} (q3)
 (q1) edge node {$1$} (q2)
 (q3) edge node {$1$} (q2)
 (q2) edge [loop below] node {$0$}()
 (q2) edge [loop left] node {$1$}()
 (q3) edge [loop below] node {$0$}();
\end{tikzpicture}
\end{center}
 By mimicking previous steps, we then obtain that 
\begin{equation}\label{displaytuplecharge}
 \delta^{c}(q_0, 1010) = \left( q_2, \frac{7}{8} \right). 
\end{equation}
\end{example}

 Since a DFA is the most basic model of a computer \cite[p.\ 4]{Shallit2009}, this motivates, as suggested in Section \ref{sectionIntro}, 
 the application of generalization 
 of DFAs. To formalize how charge-extended transition functions
 can be used to generalize DFAs, we require Definition \ref{reduceddefinition} below. 

\begin{remark}
 For convenience, we adopt the convention whereby any numerical values for elements in $Q$ or $\Sigma$ or $\Delta$ 
 for a DFA/DDFA or DFAO/DDFAO are in $\mathbb{Q}$ (as opposed to, say, $\mathbb{R}$). 
\end{remark}

\begin{definition}\label{reduceddefinition}
 Let $Q$ and $\Sigma$ and $\delta$ and $q_0$ and $P$ be as in the definition of a DDFA, with $\delta^{c}$ as in Definition 
 \ref{definitionchargeext}. We then define the \emph{reduced charge-extended transition function} $\overline{\delta^{c}}$ on $Q 
 \times \Sigma^{\ast}$ so that, for $q \in Q$ and for $w \in \Sigma^{\ast}$, we have that 
 $$\overline{\delta^{c}} = \delta^{c}(q, w, 1) \, \delta^{c}(q, w, 2) $$ 
 if $ \delta^{c}(q, w, 1) \in \mathbb{Q}$, 
 and, otherwise, we write 
 $\overline{\delta^{c}} $ as the formal product 
 of $ \delta^{c}(q, w, 1) $ and $\delta^{c}(q, w, 2)$, denoted as 
 $ \delta^{c}(q, w, 1) \delta^{c}(q, w, 2)$, 
 with the understanding that $q 1 = q$ and $q 0 = 0$ for $q \in Q$. 
 Simlarly, we define the \emph{reduced output function}
 $\overline{\tau}$ so that
 $$ \overline{\tau}\left( \overline{\delta^{c}}\big( q, w \big) \right) 
 = \tau\left( \delta^{c}(q, w, 1) \right) \delta^{c}(q, w, 2), $$
 with the same conventions as before concerning numerical products, and similarly for formal products. 
\end{definition}

\begin{example}
 From the evaluation in \eqref{displaytuplecharge} from Example \ref{extuplecharge}, we obtain the formal product 
\begin{equation*}
 \overline{\delta^{c}}(q_0, 1010) = \frac{7q_2}{8}. 
\end{equation*}
\end{example}

 Our DDFAO construction gives rise to generalizations of automatic and $k$-regular sequences, 
 as we later demonstrate. 
 To begin with, we formalize how 
 our DDFAO (resp.\ DDFA) construction generalizes DFAOs (resp.\ DFAs), as below. 

\begin{theorem}\label{theoremmainformalize}
 Let $(Q, \Sigma, \delta, q_0, P, \Delta, \tau)$ be a DDFAO. For each $q \in Q$ and $s \in \Sigma$, let 
 $ n^{\text{current}}_{q, s} = 1$ and 
 $n_{q, t}^{\text{not current}} = 0$ for $t \in \Sigma \setminus \{ s \}$. Then 
\begin{equation}\label{20250612153PM1A}
 \overline{\delta^{c}}(q_0, w) = \delta(q_0, w) 
\end{equation}
 for an arbitrary word $w \in \Sigma^{\ast}$. 
\end{theorem}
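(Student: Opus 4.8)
The plan is to prove, by induction on the length $\ell(w)$ of the input word, the stronger statement that under the stated specialization of the discharging parameters the contemporary charge $c^{q,i}$ is, at every stage $i \in \{0, 1, \ldots, \ell(w)\}$, the indicator function of the ``current'' state $q^{(k_i)}$; that is, $c^{q,i}\big(q^{(k_i)}\big) = 1$ and $c^{q,i}\big(q^{(j)}\big) = 0$ whenever $q^{(j)} \neq q^{(k_i)}$. Granting this auxiliary claim, the evaluation \eqref{setdeltacqw} gives $\delta^{c}(q,w) = \big(q^{(k_{\ell(w)})}, 1\big)$; since $q^{(k_{\ell(w)})} = \delta(\ldots\delta(\delta(q,w_1),w_2),\ldots,w_{\ell(w)}) = \delta(q,w)$ by \eqref{deltanested} together with the recursive definition of the extended transition function, we get $\delta^{c}(q,w,1) = \delta(q,w)$ and $\delta^{c}(q,w,2) = 1$, so Definition \ref{reduceddefinition} and the convention $q1 = q$ yield $\overline{\delta^{c}}(q,w) = \delta^{c}(q,w,1)\,\delta^{c}(q,w,2) = \delta(q,w)$. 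Taking $q = q_0$ then gives \eqref{20250612153PM1A}.

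First I would record that the chosen values are admissible, i.e.\ satisfy \eqref{totalaxiom}, since $n^{\text{current}}_{q,s} + \sum_{t\neq s} n^{\text{not current}}_{q,t} = 1 + 0 = 1$, so that the tuple is a well-defined DDFAO and $\delta^{c}$ is indeed defined. For the base case $i=0$ of the auxiliary induction, the initialization $c^{q,0}(q) = 1$, $c^{q,0}(q^{(j)}) = 0$ for $q^{(j)} \neq q$, together with $q^{(k_0)} = q$, is exactly the indicator-of-current-state property; note that when $w = \epsilon$ this already closes the argument since $\delta^{c}(q,\epsilon) = (q,1)$ directly.

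For the inductive step I would substitute $n^{\text{current}}_{q^{(k_i)}, w_{i+1}} = 1$ and $n^{\text{not current}}_{q^{(k_i)}, t} = 0$ (for $t \neq w_{i+1}$) into the piecewise recursion of Definition \ref{definitionchargeext}, splitting according to whether $\delta\big(q^{(k_i)}, w_{i+1}\big)$ is a self-loop at $q^{(k_i)}$. In the non-loop case one has $q^{(k_{i+1})} = \delta\big(q^{(k_i)}, w_{i+1}\big) \neq q^{(k_i)}$, hence $k_{i+1} \neq k_i$; all the ``not current'' sums vanish, so (using the induction hypothesis that $c^{q,i}$ is the indicator of $q^{(k_i)}$) one computes $c^{q,i+1}\big(q^{(k_i)}\big) = c^{q,i}\big(q^{(k_i)}\big)(1 - 1) = 0$, $c^{q,i+1}\big(q^{(k_{i+1})}\big) = c^{q,i}\big(q^{(k_{i+1})}\big) + c^{q,i}\big(q^{(k_i)}\big)(1 + 0) = 0 + 1 = 1$, and $c^{q,i+1}\big(q^{(j)}\big) = c^{q,i}\big(q^{(j)}\big) + c^{q,i}\big(q^{(k_i)}\big)\cdot 0 = 0$ for the remaining $j$. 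In the self-loop case $k_{i+1} = k_i$, and the recursion gives $c^{q,i+1}\big(q^{(k_i)}\big) = c^{q,i}\big(q^{(k_i)}\big)(1 - 0) = 1$, every other value has only $c^{q,i}(q^{(k_i)})\cdot 0$ added to it and so stays $0$, and the ``with'' clause $c^{q,i+1}\big(q^{(k_{i+1})}\big) = c^{q,i+1}\big(q^{(k_i)}\big)$ is consistent. In either case $c^{q,i+1}$ is the indicator of $q^{(k_{i+1})}$, which completes the induction.

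The main obstacle I anticipate is purely bookkeeping: keeping the index identifications $k_{i+1} = k_i$ (self-loop) and $k_{i+1} \neq k_i$ (non-loop) straight so that the two branches of each piecewise formula are applied consistently, and checking that the degenerate sums over the empty set and over sets of the form $\{ t : \delta(q^{(k_i)}, t) = q^{(j)} \}$ all evaluate to $0$ under the specialization. Once the indicator invariant is in place, the passage to $\overline{\delta^{c}}(q_0,w) = \delta(q_0,w)$ is immediate from the definitions of $\delta^{c}$ and $\overline{\delta^{c}}$ and the product conventions $q1 = q$ and $q0 = 0$.
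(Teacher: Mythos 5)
Your proposal is correct and follows essentially the same route as the paper's proof: an induction on the stage $i$ showing that under the specialization $n^{\text{current}}_{q,s}=1$, $n^{\text{not current}}_{q,t}=0$ the contemporary charge $c^{q,i}$ is the indicator of the current state $q^{(k_i)}$, followed by the same passage through \eqref{setdeltacqw} and Definition \ref{reduceddefinition} to conclude $\overline{\delta^{c}}(q_0,w)=\delta(q_0,w)$. Your explicit check that the specialized parameters satisfy \eqref{totalaxiom} is a small addition the paper omits, but otherwise the arguments coincide.
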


\begin{proof}
 From Definition \ref{definitionchargeext}, we find that 
\begin{equation}\label{20250612202PM1A}
 c^{q_0, i+1}\big( q^{(k_{i})} \big) = 
 \begin{cases} 
 0 & \text{if $\delta\big( q^{(k_i)}, w_{i+1} \big) \neq q^{(k_i)} $}, \\ 
 c^{q_0, i}\big( q^{(k_{i})} \big) 
 & \text{if $\delta\big( q^{(k_i)}, w_{i+1} \big) = q^{(k_i)} $}. 
 \end{cases} 
\end{equation}
 Similarly, Definition \ref{definitionchargeext} gives us that 
\begin{equation}\label{fromdefinitioncases2}
 c^{q_0, i+1}\big( q^{(k_{i+1})} \big) = \\ 
 \begin{cases} 
 c^{q_0, i}\big( q^{(k_{i+1})} \big) + c^{q_0, i}\big( q^{(k_{i})} \big) 
 & \text{if $\delta\big( q^{(k_i)}, w_{i+1} \big) \neq q^{(k_i)} $}, \\ 
 c^{q_0, i+1}\big( q^{(k_i)} \big) & \text{if $\delta\big( q^{(k_i)}, w_{i+1} \big) = q^{(k_i)} $}, 
 \end{cases} 
\end{equation}
 and, letting $j \in \{ 1, 2, \ldots, |Q| \}$ be not equal to $k_i$ and not equal to $k_{i+1}$, we find that 
\begin{equation}\label{2025061241888888888878P8M8A}
 c^{q_0, i+1}\big( q^{(j)} \big) = c^{q_0, i}\big( q^{(j)} \big). 
 \end{equation}
 For the case whereby the empty word $w$ is empty, the desired equality in 
 \eqref{20250612153PM1A} according to the $w = \epsilon$
 cases given in the definitions for an extended transition function and for a (reduced)
 charge-extended transition function. 
 So, suppose that $w$ is nonempty. 
 We claim that 
\begin{equation}\label{claimcq0i}
 c^{q_0, i}\big( q^{(k_i)} \big) = 1
\end{equation}
 and that $c^{q_0, i}\big( q^{(j)} \big) = 0$ for $q^{(j)} \neq q^{(k_i)}$, 
 for indices $i \in \{ 0, 1, \ldots, \ell(w) \}$. 
 From Definition \ref{definitionchargeext}, 
 we find that 
 $c^{q_0, 0}(q_0) = 1$ and $c^{q_0, 0}\big( q^{(j)} \big) = 0$ 
 for $q^{(j)} \neq q_0$. Inductively, we suppose that the given claim holds for for indices $j$ not exceeding $i$. 
 Then \eqref{20250612202PM1A} reduces so that 
\begin{equation}\label{2025111000061020411PM1A}
 c^{q_0, i+1}\big( q^{(k_{i})} \big) = 
 \begin{cases} 
 0 & \text{if $\delta\big( q^{(k_i)}, w_{i+1} \big) \neq q^{(k_i)} $}, \\ 
 1 
 & \text{if $\delta\big( q^{(k_i)}, w_{i+1} \big) = q^{(k_i)} $}, 
 \end{cases} 
\end{equation}
 and \eqref{fromdefinitioncases2} reduces so that 
\begin{equation}\label{2025060122544445444P4M4A}
 c^{q_0, i+1}\big( q^{(k_{i+1})} \big) = \\ 
 \begin{cases} 
 c^{q_0, i}\big( q^{(k_{i+1})} \big) + 1 
 & \text{if $\delta\big( q^{(k_i)}, w_{i+1} \big) \neq q^{(k_i)} $}, \\ 
 c^{q_0, i+1}\big( q^{(k_i)} \big) & \text{if $\delta\big( q^{(k_i)}, w_{i+1} \big) = q^{(k_i)} $}, 
 \end{cases} 
\end{equation}
 and, moreover, since $\delta\big( q^{(k_i)}, w_{i+1} \big) \neq q^{(k_i)}$
 is equivalent to $q^{(k_{i+1})} \neq q^{(k_i)}$, we obtain from 
 \eqref{2025060122544445444P4M4A} and the inductive hypothesis that 
\begin{equation}\label{20250612411P111111111MA}
 c^{q_0, i+1}\big( q^{(k_{i+1})} \big) = \\ 
 \begin{cases} 
 1 
 & \text{if $\delta\big( q^{(k_i)}, w_{i+1} \big) \neq q^{(k_i)} $}, \\ 
 c^{q_0, i+1}\big( q^{(k_i)} \big) & \text{if $\delta\big( q^{(k_i)}, w_{i+1} \big) = q^{(k_i)} $}, 
 \end{cases} 
\end{equation}
 Moreover, from the latter case of \eqref{2025111000061020411PM1A}, 
 we find that \eqref{20250612411P111111111MA} reduces so that 
 $ c^{q_0, i+1}\big( q^{(k_{i+1})} \big) = 1$, 
 as desired. As indicated in the first case of \eqref{2025111000061020411PM1A}, we have that 
 $ c^{q_0, i+1}\big( q^{(k_{i})} \big) = 0 $ 
 if $q^{(k_i)}$ and $q^{(k_{i+1})}$ are distinct. 
 If $j \in \{ 1, 2, \ldots, |Q| \}$ is not equal to $k_i$ and is not equal to $k_{i+1}$, 
 then \eqref{2025061241888888888878P8M8A} in conjunction with the inductive hypothesis 
 give us that 
 $ c^{q_0, i+1}\big( q^{(j)} \big) = 0$, 
 as desired. The definition in \eqref{setdeltacqw} 
 together with \eqref{claimcq0i} then give us that 
\begin{equation}\label{pairwith1}
 \delta^{c}( q_0, w ) 
 = \left( q^{( k_{\ell(w)} )}, 
 1 \right), 
\end{equation}
 so that \eqref{pairwith1} and Definition \ref{reduceddefinition} together give us that 
\begin{equation}\label{pairreducedtogether}
 \overline{\delta^{c}}( q_0, w ) 
 = q^{( k_{\ell(w)} )}, 
\end{equation}
 and the right-hand side of \eqref{pairreducedtogether} is equal to 
 $\delta(q_0, w)$, according to \eqref{deltanested}. 
\end{proof}

 A source of motivation related to how Theorem \ref{theoremmainformalize} formalizes how our constructions generalize DFAs and 
 DFAOs has to do with how our constructions give rise, as described below, to a new generalization of $k$-regular sequences. 

\section{Quasi-$k$-regular sequences}
 The concept of a \emph{$k$-regular sequence} was introduced in a seminal paper by Allouche and Shallit \cite{AlloucheShallit1992} 
 as a generalization of automatic sequences inspired by the uses of automatic sequences in both number theory and formal language 
 theory. For our purposes, the following definition, out of the many equivalent definitions of a $k$-regular sequence, is appropriate. 
 Definition \ref{definitionkreg} below slightly differs from the corresponding result (Theorem 2.2) from the work of Allouche and Shallit 
 \cite{AlloucheShallit1992} (due to the context in which the expression \emph{$\mathbb{Z}$-linear combination} is used), but can be 
 shown to be equivalent to the usual definitions of the term \emph{$k$-regular sequence} (for the case of integer sequences). 

\begin{definition}\label{definitionkreg}
 A sequence $(s(n) : n \in \mathbb{N}_{0} )$ is said to be \emph{$k$-regular} if there is a nonnegative integer $E$ such that, for every 
 $e_j > E$ and for all $r_j$ such that $0 \leq r_j \leq k^{e_j} - 1$, the $n^{\text{th}}$ entry of any subsequence of $s$ of the form $ 
 (s(k^{e_j} n + r_j ) : n \in \mathbb{N}_{0} )$ may be written as an $\mathbb{Z}$-linear combination of expressions of the form $ s\left( 
 k^{f_{i, j}} n + b_{i, j} \right) $ and of a constant integer sequence, and where $f_{i, j} \leq E$ and $0 \leq b_{i, j} \leq k^{f_{i, j}} - 1$ 
 (cf.\ \cite[Theorem 16.1.3]{AlloucheShallit2003} \cite{AlloucheShallit1992}). 
\end{definition}

\begin{example}
 As in the standard text on automatic sequences \cite[p.\ 438]{AlloucheShallit2003}, we highlight a prototypical instance of a 
 $k$-regular sequence, by setting $s_{2}(n)$ as the number of $1$-digits in the base-$2$ expansion of $n \in \mathbb{N}_{0}$, so that 
 $s_{2}(2n) = s_{2}(n)$ and $s_{2}(2n + 1) = s_{2}(n) + 1$. Observe that in this latter case, we would take a $\mathbb{Z}$-linear 
 combination of $(s_2(n) : n \in \mathbb{N}_{0})$ and the constant integer sequence $(1 : n \in \mathbb{N}_{0})$. 
\end{example}

 Our investigations based on the concept of a DDFAO have led us to introduce and formulate a generalization of Definition 
 \ref{definitionkreg}, and this is illustrated below. 

\begin{example}\label{exeseq}
 Let $(Q, \Sigma, \delta, q_0, P, \Delta, \tau)$ be the DDFAO 
 illustrated in Example \ref{extuplecharge}. 
 We not enforce the assigned values such that $q_{0} = q_{1} = q_{2} = q_{3} = 1$. 
 We then define a sequence $(d(n) : n \in \mathbb{N}_{0} )$ if 
 reduced final charges associated with the given DDFAO, 
 by setting $d(n)$ as the reduced final charge 
 $ \overline{\delta^{c}}( q_0, w )$, where $w$ is the word given by the base-$2$ expansion of $n$. 
 This gives us that $d(0) = \frac{1}{2}$ and, letting nonnegative integers be denoted 
 as binary words according to base-$2$ expansions, that 
\begin{align}
 d(11s) & = \frac{3}{4}, \label{drelation1} \\ 
 d\big( 1\underbrace{00\cdots{0}}_{\ell \geq 0} \big) & = \frac{1}{2^{\ell + 1}}, \label{drelation1} \\ 
 d\big( 1\underbrace{00\cdots{0}}_{\ell \geq 1}1s \big) & = 1 - \frac{1}{2^{\ell + 2}}, \label{drelation3}
\end{align}
 letting $s$ denote a possibly empty string or word, giving rise to the rational sequence 
 $$ \left( d(n) : n \in \mathbb{N}_{0} \right) 
 = \Big( \frac{1}{2},\frac{1}{2},\frac{1}{4},\frac{3}{4},\frac{1}{8},\frac{7}{8},\frac{3}{4},\frac{3}{4},\frac{1}{16},\frac{15}{16},\frac{7}{8},
\frac{7}{8},\frac{3}{4},\frac{3}{4},\frac{3}{4},\frac{3}{4},\frac{1}{32}, \ldots \Big). $$ 
 Now, we form an integer sequence 
 $(e(n) : n \in \mathbb{N}_{0})$ from $(d(n) : n \in \mathbb{N}_{0})$ 
 by multiplying each entry in this latter sequence by a fixed power of $2$, so that 
 $e(0) = 1$ and so that 
\begin{align*}
 e(11s) & = 4 \, d(11s), \\ 
 e\big( 1\underbrace{00\cdots{0}}_{\ell \geq 0} \big) & = 
 2^{\ell + 1} d\big( 1\underbrace{00\cdots{0}}_{\ell \geq 0} \big), \\
 e\big( 1\underbrace{00\cdots{0}}_{\ell \geq 1}1s \big) 
 & = 2^{\ell + 2} d\big( 1\underbrace{00\cdots{0}}_{\ell \geq 1}1s \big), 
\end{align*}
 yielding the integer sequence
\begin{equation}\label{numericaleseq}
 (e(n) : n \in \mathbb{N}_{0}) 
 = \big( 1, 1, 1, 3, 1, 7, 3, 3, 1, 15, 7, 7, 3, 3, 3, 3, 1, 31, 15, \ldots \big). 
\end{equation}
 From the relations for the $d$-sequence among \eqref{drelation1}--\eqref{drelation3}, we obtain that 
\begin{equation}\label{e2nen}
 e(2n) = e(n), 
\end{equation}
 which recalls the definition of a $k$-regular sequence. In contrast, we have that 
\begin{equation}\label{e4nplus1}
 e(4n+1) \in \{ e(2n), \, 2 e(2n+1) + 1 \} 
\end{equation}
 for $n \in \mathbb{N}_{0}$  and that 
\begin{equation}\label{e4np3}
 e(4n+3) \in \{ e(2n), \, e(2n+1) \} 
\end{equation}
 for $n \in \mathbb{N}$. 
 Moreover, it can be shown that it is not the case that 
 $e(4 n + 1) = e(n)$ for all sufficiently large $n$ and it is not the case that
 $e(4n+1) = 2 e(2n+1) + 1$ for all sufficiently large $n$, 
 and similarly with respect to \eqref{e4np3}. 
\end{example}

 The integer sequence in \eqref{numericaleseq} obtained from our DDFAO construction in 
 Example \ref{extuplecharge} suggests a natural extension of $k$-regular sequences, 
 as below, and in view of the relations among 
 \eqref{e2nen}, \eqref{e4nplus1}, and \eqref{e4np3}. 
 To the best of our knowledge, the below generalization of $k$-regular sequences 
 has not been considered (in any equivalent form) in any previous literature on 
 variants or extensions of the concepts of $k$-automaticity and $k$-regularity for infinite sequences 
 \cite{DressvonHaeseler2003,Konieczny2024,RowlandYassawi2017}. The notion of quasi $k$-regularity introduced below can be generalized 
 to non-integer sequences, 
 but, for the sake of clarity, we restrict our attention, for the time being, to integer sequences. 

\begin{definition}\label{definitionquasi}
 We define a \emph{quasi-$k$-regular (integer) sequence}  as an integer sequence $(s(n) : n \in \mathbb{N}_{0} )$ such that there is a 
 nonnegative integer $E$ such that,  for every $e_j > E$ and for all $r_j$ such that $0 \leq r_j \leq k^{e_j} - 1$, and for fixed $m$ and 
 for $n \geq m$, the $n^{\text{th}}$ entry  of any subsequence of $s$ of the form $(s(k^{e_j} n + r_j ) : n \in \mathbb{N}_{0} )$ may be 
 written as follows. For $e_j$ and $r_j$ as specified,  let $\mathcal{I}_{e_j, r_j}$ denote a finite set of consecutive, positive integers 
 starting with $1$.  Then, for all $n \geq m$, there exists an index $\mathcal{J}_{n} \in \mathcal{I}_{e_j, r_j}$  and integer coefficients $c_{i, 
  j}^{\left( \mathcal{J}_{n} \right)}$ and  $c_{\text{constant}}^{\left( \mathcal{J}_{n} \right)}$ (possibly depending on $n$)  together with 
 values $f_{i, j}^{\left( \mathcal{J}_{n} \right)} $ and $ b_{i, j}^{\left( \mathcal{J}_{n}  \right)} $ (again possibly depending on $n$) such that 
 $0 \leq f_{i, j}^{\left( \mathcal{J}_{n} \right)} \leq E$ and $0 \leq b_{i, j}^{\left( \mathcal{J}_{n} \right)} \leq k^{f_{i, j}^{\left( \mathcal{J}_{n} 
 \right)}} - 1$ and such that 
\begin{equation*}
 s(k^{e_j} n + r_j ) = c_{\text{constant}}^{\left( \mathcal{J}_{n} \right)}
 + \sum_{i} c_{i, j}^{\left( \mathcal{J}_{n} \right) } \, s\big( k^{f_{i, j}^{\left( \mathcal{J}_{n} \right)}} n 
 + b_{i, j}^{\left( \mathcal{J}_{n} \right)} \big). 
\end{equation*}
\end{definition}

\begin{example}\label{exmathcalT}
 Set $\mathcal{T}(0) = 0$ and, for $n \in \mathbb{N}_{0}$ set 
\[ \mathcal{T}(2n) = \begin{cases} 
 1 - \mathcal{T}(n) & \text{if $n$ is a prime number,} \\
 \mathcal{T}(n) & \text{otherwise}, 
 \end{cases} \] 
 and set $$ \mathcal{T}(2n+1) = 1 - \mathcal{T}(n), $$ yielding the integer sequence $$ \big( \mathcal{T}(n) : n \in \mathbb{N}_{0} 
 \big) = (0, 1, 1, 0, 0, 0, 1, 1, 0, 1, 1, 1, 1, 0, 0, 0, 0, 1, 1, 0, 1, 0, 0, \ldots), $$ which is not currently included in the OEIS. Since 
\begin{equation}\label{mathcalevenin}
 \mathcal{T}(2n) \in \{ 1 - \mathcal{T}(n), \mathcal{T}(n) \} 
\end{equation}
 for all $n \in \mathbb{N}_{0}$, and similarly for the case of odd arguments, 
 this can be used to show that the $\mathcal{T}$-sequence satisfies 
 the conditions in Definition \ref{definitionquasi}. 
 It can also be shown that the $\mathcal{T}$-sequence is not $k$-regular. 
\end{example}

 The connection between our notion of quasi-$k$-regularity and our DDFAO construction is suggested below. 

\begin{example}\label{showequasi}
 The $e$-sequence indicated in \eqref{numericaleseq} 
 and determined by the DDFAO in Example \ref{extuplecharge} 
 is quasi-$4$-regular, and this follows from the relations among 
 \eqref{e2nen}, \eqref{e4nplus1}, and \eqref{e4np3}. 
\end{example}

 Informally, the notion of quasi-$k$-regularity naturally generalizes that for 
 $k$-regularity in the sense that quasi-$k$-regularity allows 
 ``options'' in terms of how 
 an expression of the form $ s(k^{e_j} n + r_j ) $ can be reduced 
 in terms of a preceding sequence entry, as suggested by \eqref{mathcalevenin}. 
 This can be formalized via the following result. 

\begin{theorem}\label{quasiktheorem}
 Let $(s(n) : n \in \mathbb{N}_{0})$ be a quasi-$k$-regular sequence 
 such that, in the notation of Definition \ref{definitionquasi}, we have that $m = 0$
 and every index set of the form $\mathcal{I}_{e_{j}, r_{j}}$ is a singleton set.
 Then $(s(n) : n \in \mathbb{N}_{0})$ is a $k$-regular sequence. 
\end{theorem}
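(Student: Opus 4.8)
The plan is to unwind Definition~\ref{definitionquasi} and Definition~\ref{definitionkreg} and to observe that, under the two stated hypotheses, the defining condition for quasi-$k$-regularity collapses onto the defining condition for $k$-regularity, with the \emph{same} witnessing integer $E$. So the argument is, in essence, a careful matching-up of the two definitions rather than a computation.

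Concretely, I would begin by fixing the integer $E$ witnessing the quasi-$k$-regularity of $(s(n) : n \in \mathbb{N}_0)$, and then fix an arbitrary exponent $e_j > E$ together with an arbitrary residue $r_j$ with $0 \le r_j \le k^{e_j}-1$. By hypothesis the associated index set $\mathcal{I}_{e_j, r_j}$ is a singleton; write $\mathcal{I}_{e_j, r_j} = \{1\}$. Then for every $n$ the index $\mathcal{J}_n \in \mathcal{I}_{e_j, r_j}$ supplied by Definition~\ref{definitionquasi} is forced to equal $1$. At this point I would make explicit the reading of the parenthetical ``possibly depending on $n$'': the data $c^{(\mathcal{J}_n)}_{i,j}$, $c^{(\mathcal{J}_n)}_{\mathrm{constant}}$, $f^{(\mathcal{J}_n)}_{i,j}$, $b^{(\mathcal{J}_n)}_{i,j}$ (and the index range of $i$) are attached to the element $\mathcal{J}_n$ of $\mathcal{I}_{e_j, r_j}$, so the only mechanism by which the reduction can vary with $n$ is through the choice of $\mathcal{J}_n$ itself. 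Hence, when $\mathcal{I}_{e_j, r_j} = \{1\}$, there is a single fixed tuple of integer coefficients $c^{(1)}_{i,j}$, $c^{(1)}_{\mathrm{constant}}$ and offsets $f^{(1)}_{i,j}$, $b^{(1)}_{i,j}$, with $0 \le f^{(1)}_{i,j} \le E$ and $0 \le b^{(1)}_{i,j} \le k^{f^{(1)}_{i,j}}-1$, for which
\[
 s(k^{e_j} n + r_j) \;=\; c^{(1)}_{\mathrm{constant}} + \sum_i c^{(1)}_{i,j}\, s\!\big(k^{f^{(1)}_{i,j}} n + b^{(1)}_{i,j}\big)
\]
holds for all $n \ge m$. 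Invoking the second hypothesis, $m = 0$, makes this identity valid for every $n \in \mathbb{N}_0$. The right-hand side is exactly a $\mathbb{Z}$-linear combination of the constant integer sequence and of finitely many subsequences of the form $(s(k^{f}n+b) : n \in \mathbb{N}_0)$ with $0 \le f \le E$ and $0 \le b \le k^{f}-1$, so in particular $f^{(1)}_{i,j} \le E$ as Definition~\ref{definitionkreg} demands. Since $e_j > E$ and $r_j$ were arbitrary, $E$ certifies that $s$ satisfies Definition~\ref{definitionkreg}, i.e.\ $s$ is $k$-regular.

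I do not expect a genuine obstacle here, since the result is a sanity check that quasi-$k$-regularity is a bona fide generalization; the one point requiring care — and the crux of the write-up — is the interpretation just flagged: for a fixed element of $\mathcal{I}_{e_j, r_j}$ the associated reduction rule must be taken to be independent of $n$. (If, instead, the coefficients were permitted to vary freely with $n$ while $\mathcal{J}_n$ stayed fixed, the quasi-$k$-regular condition would be vacuous — one could always take $c_{\mathrm{constant}} = s(k^{e_j}n+r_j)$ with all other coefficients zero — and the theorem would be false.) Once this interpretation is recorded, the singleton hypothesis removes all $n$-dependence from the reduction and the hypothesis $m=0$ aligns the range of validity, so Definition~\ref{definitionquasi} reduces verbatim to Definition~\ref{definitionkreg}.
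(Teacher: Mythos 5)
Your proof is correct and takes essentially the same route as the paper's: the singleton hypothesis forces the coefficients $c_{i,j}$, $c_{\text{constant}}$, $f_{i,j}$, $b_{i,j}$ to be independent of $n$, and $m=0$ makes the single reduction identity hold for all $n \geq 0$, which is precisely Definition \ref{definitionkreg}. Your explicit caveat about how to read the ``possibly depending on $n$'' clause (the data being attached to the element $\mathcal{J}_{n}$ of $\mathcal{I}_{e_j, r_j}$ rather than to $n$ itself) is exactly the reading the paper's proof adopts implicitly when it asserts that the expressions ``do not depend on $n$.''
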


\begin{proof}
 Since each index set of the form $\mathcal{I}_{e_{j}, r_{j}}$ is a singleton set, the expressions $c_{i, j}^{\left( \mathcal{J}_{n} \right)}$ 
 and $c_{\text{constant}}^{\left( \mathcal{J}_{n} \right)}$ and $f_{i, j}^{\left( \mathcal{J}_{n} \right)} $ and $ b_{i, 
 j}^{\left( \mathcal{J}_{n} \right)} $ do not depend on $n$ (or the unique element $\mathcal{J}_{n}$ in $ \mathcal{I}_{e_j, r_j}$), and, 
 hence, we may unambiguously write $c_{i, j}^{\left( \mathcal{J}_{n} \right)} = c_{i, j}$ and $c_{\text{constant}}^{\left( \mathcal{J}_{n} 
 \right)} = c_{\text{constant}}$ and $f_{i, j}^{\left( \mathcal{J}_{n} \right)} = f_{i, j}$ and $ b_{i, j}^{\left( \mathcal{J}_{n} \right)} = b_{i, j}$. 
 The expansion 
\begin{equation*}
 s(k^{e_j} n + r_j ) = c_{\text{constant}} + \sum_{i} c_{i, j} \, s\big( k^{f_{i, j}} n + b_{i, j} \big) 
\end{equation*}
 thus holds for all $n \geq 0$, so that the required conditions for a $k$-regular sequence are satisfied. 
\end{proof}

 It can be shown that the set of all quasi-$k$-regular sequences
 forms a ring structure generalizing that of the ring of $k$-regular sequences. 
 This, together with 
 Example \ref{exmathcalT}, Example \ref{showequasi}, and 
 Theorem \ref{quasiktheorem} 
 suggest that quasi-$k$-regular sequences could provide a far-reaching generalization
 of $k$-regular sequences, thus motivating our DDFAO and DDFA constructions 
 and the future subjects of research given in the concluding section below. 

\section{Conclusion}
 We conclude by encouraging the pursuit of the two research subjects below. 

 To begin with, we encourage a full exploration of quasi-$k$-regular sequences, 
 through the exploration of how Allouche and Shallit's work on $k$-regular sequences
 \cite[\S16]{AlloucheShallit2003} \cite{AlloucheShallit1992} could be translated so as to be applicable using 
 Definition \ref{definitionquasi}, and through the exploration 
 of how many further applications of $k$-regular sequences could be built upon 
 using quasi-$k$-regular sequences. 

 Also, we encourage the exploration of the following question, 
 from number-theoretic perspectives and in regard to potential applications in formal language theory:
 What is the relationship between quasi-$k$-regular sequences and the 
 sequence of values for the reduced charge-extended transition function 
 for a DDFAO (given by inputting the expansions of consecutive integers in an appropriate base)? 
 Example \ref{exeseq} suggests that by taking such a sequence, 
 it would be possible to multiply each term by a certain power of a specified base 
 to form a quasi-$k$-regular sequence. We leave it as an open problem to prove this. 

\subsection*{Acknowledgements}
 The author was supported by a Killam Postdoctoral Fellowship from the Killam Trusts and thanks Karl Dilcher
 for useful discussions.

 \ 

{\textsc{John M. Campbell}} 

\vspace{0.1in}

Department of Mathematics and Statistics

Dalhousie University

6299 South St, Halifax, NS B3H 4R2

\vspace{0.1in}

{\tt jh241966@dal.ca}

\end{document}